%
\let\oldvec\vec
\documentclass[runningheads]{llncs}
\let\vec\oldvec
\usepackage{graphicx,amsmath,amsfonts,amssymb}
\usepackage{cite}

\newcommand{\E}{\mathbb{E}}

%

\newtheorem{prob}[definition]{Problem}

\begin{document}
\title{Online Payment Network Design}
%
%
\author{Georgia Avarikioti\and
Kenan Besic \and
Yuyi Wang \and Roger Wattenhofer}
\authorrunning{G. Avarikioti et al.}
%
\institute{ETH Z{\"u}rich, Switzerland
\\
\email{\{zetavar,besick,yuwang,wattenhofer\}@ethz.ch}}
\maketitle              
\begin{abstract}
Payment channels allow transactions between participants of the blockchain to be executed securely off-chain, and thus provide a promising solution for the scalability problem of popular blockchains. 
We study the online network design problem for payment channels, assuming a central coordinator. We focus on a single channel, where the coordinator desires to maximize the number of accepted transactions under given capital constraints. Despite the simplicity of the problem, we present a flurry of impossibility results, both for deterministic and randomized algorithms against adaptive as well as oblivious adversaries.


\keywords{blockchain\and payment channels\and layer 2\and network design\and online algorithms\and competitive ratio}
\end{abstract}
%
%
%
\section{Introduction}

Recently, blockchain systems and cryptocurrencies such as Bitcoin \cite{Nakamoto_bitcoin:a} and Ethereum \cite{buterin2013whitepaper} have gained in popularity -- in research, in economy and even in the general public. With the increased number of transactions, scalability has become a serious problem \cite{decker2015duplex,croman2016scaling}. The maximum  transaction throughput on the bitcoin network is approximately ten transactions per second. Other networks can process up to hundreds of transactions per second. In contrast, current digital payment systems (e.g. credit cards, WeChat Pay, etc.) handle tens of thousands of transactions per second \cite{visaTransactions}. 

Several solutions have been proposed to address the limitation on the transaction throughput on blockchain systems. Sharding \cite{kokoris2018omniledger,Luu2016ASS} is a so-called first-layer (on-chain) solution. 
However, the most promising approach are payment channels \cite{spilman2013channels,poon2015lightning,decker2015duplex}.
Payment channels allow transactions between two parties of a blockchain system to be executed off-chain. Furthermore, the existence of multiple two-party channels leads to the creation of payment networks, where transactions between a sender and a receiver can be executed through a path of channels in the network, even if the two parties have no direct channel with each other. Payment networks~\cite{poon2015lightning,decker2015duplex,miller2017sprites,Dziembowski2017PERUNVP,green2017bolt,decker2018eltoo,gudgeon2019offchain} operate \textit{on top} of the blockchain, introducing a second layer, and are thus known as \textit{Layer 2} solutions.

Even though payment networks are efficient and enable high transaction throughput on blockchain systems, they also demand from users to lock a lot of capital a priori.
In addition, complex routing algorithms are needed to discover routes with enough capital capacity from sender to receiver.
To address these issues, multiple proposals emerged that suggest the use of a central operator, in theory \cite{avarikioti2018algorithmic,avarikioti2018payment} and in practice \cite{Khalil2018nocust,poon2017plasma}.  
Concretely, \cite{avarikioti2018algorithmic} studied the complexity of a Payment Service Provider (PSP) from an algorithmic perspective. Similarly, \cite{avarikioti2018payment} examined the optimal graph structure and fee assignment to maximize a PSP's profit.

Similarly to \cite{avarikioti2018algorithmic,avarikioti2018payment}, we study the problem from the perspective of a PSP. The PSP creates the payment network and opens the channels between interacting parties in which the PSP locks the required capital, acting as a creditor. Since opening a channel is associated with cost (registration with the blockchain), the PSP charges fees to the customers using its channels. Additionally, the PSP decides which channels to open and how the capital will be distributed on the channels. The PSP's objective is to maximise its profit under capital constraints.

However, both  \cite{avarikioti2018payment} and \cite{avarikioti2018algorithmic} assumed to know all future transactions. While such assumptions may be valid in many situations, our paper takes a different route. We want to know how a PSP should set up channels \textit{without any assumption of future payments}. We wonder to what extent a PSP can still do a good job, and in which cases planning ahead is hopeless. In other words, we study the so-called \textit{online} version of setting up payments channels with a single PSP.

\subsection*{Our Contributions}
We study the single channel case, in which the PSP is called to decide which transactions to accept to maximize its profit on a single channel given capital constraints. 
First, we show there is no randomized online algorithm against any adaptive online adversary. Then, we consider algorithms against oblivious adversaries; we show that there is neither a competitive deterministic nor a competitive randomized algorithm. We derive our results for the randomized case from analysing deterministic algorithms with advice. Next, we consider resource augmentation \cite{Kalyanasundaram:2000:SPC:347476.347479}, an approach that relaxes the capital constraints to achieve better competitiveness. However, we prove there is no competitive deterministic algorithm, even with double the capital. 
Furthermore, we approach the problem as a minimization problem, where we want to minimize the number of rejected transactions. Similarly to the maximization problem, we prove there is no competitive randomized algorithm against oblivious adversaries.

\section{The Model}
In this section, we begin with a brief introduction on the payment channel operation and the current payment network design rationale. Then, we present the graph theoretic model for the payment network design problem, as addressed in this paper. Later, we define the problem variants and lastly, we present the different adversarial models we consider in this work.

\subsection{Payment Channels and Networks}
A payment channel is a construction that is established between two participants of a blockchain system and allows them to interact off-chain while maintaining the security guarantees of the blockchain. Thus, a payment network enables the exchange of capital without committing every transaction on the blockchain, and therefore addresses successfully the scalability problem of the underline blockchain.

To establish a payment channel, the participants publish a funding transaction, which is in essence an on-chain joint account where the participants of the channel lock their capital. As long as the funding transaction is securely included in the blockchain, the channel parties can execute transactions safely off-chain by updating the state of the channel, i.e., the distribution of capital between the participants. The new state of the channel is described in an update transaction which is signed by all the channel parties.

Multiple channels form a \emph{payment channel network}. To enable the transaction execution between two nodes that are not directly connected in the network, routing the transaction along a path of directly connected nodes is allowed. In such a case, we demand atomic execution of the transaction; either all payments in the path will be executed or none. 
To guarantee the atomic execution of transactions, Hashed Timelocked Contracts (HTLCs) \cite{decker2015duplex,poon2015lightning} can be used.


Routing in a payment network alleviates the necessity for direct links and hence the need for multiple channels (transactions). However, the update transactions in a payment channel are executed in private thus it is impossible to know the current distribution of the capital on a channel. This in turn leads to complex routing schemes, because often the chosen route is depleted. 

In our setting, we assume that a central authority, a Payment Service Provider (PSP), can open channels between two nodes and therefore knows how much capital is on each edge and how it is distributed between the nodes. The nodes are the PSP's customers. The PSP, as the creator of the network, bares the costs of opening channels. Further, we assume the PSP acts as a creditor, and hence locks the necessary capital in the network and then periodically gets paid by the customers (either in crypto or fiat money).
In our model, we assume, without loss of generality, that the cost for opening (or updating) a channel is $1$, which means that the total cost is the total number of funding transactions and channel updates in the network. Therefore, the fee that the PSP can charge a customer cannot be greater than $1$ per transaction. Otherwise, the customer would not execute the transaction through the payment network, but would instead directly use the blockchain.

\subsection{Graph Model}
We adapt the model of Avarikioti et al.~\cite{avarikioti2018algorithmic}.
We define the network as an undirected graph $G=(V,E)$ with a set of edges $E$ and a set of nodes $V$. A node $v \in V$ is a participant in the network whereas an edge $e = (u,v) \in E$ is a channel between two nodes $u,v \in V$. 
For each channel $(u,v)\in E$, we denote by $C_l$ and $C_r$ the capital available to node $u$ and node $v$, respectively.
Therefore, every time a transaction is executed through the channel $(u,v)$, the capitals $C_l$ and $C_r$ are updated according to the distribution of the capital in the latest update transaction of the channel. Note, that the total capital of a channel does not change, i.e., the sum of the capital $C_l$ and $C_r$ remains the same. Further, the capital moves on the channel like the balls in a row of an abacus: if $u$ wants to send  capital $c$ to $v$ on channel $(u,v)$ then the new distribution of the capital on the edge $(u,v)$, after the off-chain execution of the transaction, will be $C_l-c$ and $C_r+c$, respectively.

We denote by $t=(s,r,v)$ a transaction $t$ that moves capital of value $v$ from sender $s\in V$ to receiver $r\in V$. Furthermore, in the simple case where we examine only a single channel, i.e., the graph has a single edge, we denote by $\langle C_l ; C_r \rangle$ the capital distribution on the single edge.


\subsection{Problem Variants}

In this section, we present the problem variants, as originally introduced in \cite{avarikioti2018algorithmic}.

\begin{prob}
[General Payment Network Design]
\label{General Payment Network Design}\hfill

\noindent\textsc{Input:} Capital $C$, profit $P$, the sequence of $n$ transactions $t_i = (s_i, r_i , v_i)$ with $1 \leq i \leq n$, each containing the sender node $s_i$, the receiver node $r_i$ and the value $v_i$ of the transaction $t_i$.\\
\textsc{Output:} Strategy $S = \{0, 1\}^n$, a binary vector where the $i^{th}$ position is $1$ if we choose to execute the $i^{th}$ transaction of the input and $0$ else. The graph $G(V, E, C_l , C_r)$ is the network we created to execute the chosen transactions, where $V$ is the set of senders and receivers that participate in any transaction, $E$ is the set of channels we open and $C_l$, $C_r$ the capital on each side of each edge. Each transaction can be routed arbitrarily in $G$, denoted by $S_e = \{-1, 0, 1\}^n$, for all $e \in E$, i.e., $S_e(i) = 1$ (or $-1$) if transaction $i$ is routed through edge $e$ from left to right (from right to left, respectively) and $S_e(i) = 0$ if transaction $i$ is not routed through edge $e$.

Our goal is to return (if it exists) a strategy $S$, a graph $G$ and a routing $S_e$ subject to the following constraints:
\begin{enumerate}
\renewcommand\labelenumi{\bfseries\theenumi}
    \item $|S|-|E|\geq P$
    \item $\forall{e\in E},\forall{j\in \{1,2,\dots,n\}}, -C_l(e) \leq \sum_{i=1}^{j}S_e(i) v_i \leq C_r(e)$
    \item $\sum_{e \in E}C_l(e) + C_r(e) + |E| \leq C$
\end{enumerate}
\end{prob}

\begin{prob}[Single Channel]
\label{Single Channel}
Given a sequence of $n$ transactions $t_i = (s, r, v_i)$, where $s$ and $r$ are the nodes of the single edge $e$, a capital assignment $C_r(e), C_l(e)$, and a profit $P$, decide
whether there is a strategy $S$ such that $|S|\geq P$ and $\forall{j} \in [n], -C_l(e) \leq \sum^{j}_{i=1} S(i)v_i \leq C_r(e)$.
\end{prob}


Problem \ref{General Payment Network Design} is general and, as shown in \cite{avarikioti2018algorithmic}, difficult to tackle even in the offline setting where the future transactions are known apriori. 
For this purpose, Problems \ref{Single Channel} was defined, as a subcase of the general problem that is of interest in our setting. We study this problem in the online setting, where we assume no prior knowledge about the future transactions.

\subsection{Adversary Models}
In literature~\cite{Ben-David90onthe}, there are three types of adversaries concerning online problems: the oblivious adversary, the adaptive online adversary and the adaptive offline adversary. 
These adversarial models are equally powerful regarding deterministic algorithms but may yield drastically different results when considering randomized algorithms. We define the adversarial models below with respect to our setting.

\begin{definition}[Oblivious adversary]
An oblivious adversary provides a sequence of transactions before an online algorithm starts its computations.
\end{definition}

\begin{definition}[Adaptive online adversary]\label{def:adaptonline}
The adaptive online adversary provides the next transaction based on the decision an online algorithm makes (accept or reject the previous transaction) but serves it immediately.
\end{definition}

\begin{definition}[Adaptive offline adversary]\label{def:adaptofline}
The adaptive offline adversary provides the next transaction based on the decision an online algorithm makes (accept or reject the previous transaction) and serves the output at the end such that it acts optimally.
\end{definition}

Note that an adaptive offline adversary (Definition \ref{def:adaptofline}) knows the randomness of the online algorithm, in contrast to the adaptive online adversary (Definition \ref{def:adaptonline}).

\section{Single Channel}
In this section, we consider the online version of Problem \ref{Single Channel}. In other words, we study whether there is a profitable strategy for accepting and rejecting transactions (online) given capital constraints.

To this end, we define the competitive ratio of an algorithm, denoted by $c$, as the value that upper bounds the ratio between the profit of the optimal offline solution, $OPT(I_x)$, and the  profit of the online algorithm, $ALG(I_x)$, for any input sequence $I_x$, i.e.,
$$\forall x,~c \geq \frac{OPT(I_x)}{ALG(I_x)}$$
The profit, in both cases, represents the number of accepted transactions.

The rest of the section is structured as follows:
First, we discuss the existence of competitive algorithms against adaptive adversaries. Later, we mainly focus on algorithms, deterministic or randomized, against oblivious adversaries. In particular, we show lower bounds on the advice bits for competitive deterministic algorithms with advice. Then, we use the relationship between advice and randomization to discuss the competitiveness of randomized online algorithms. Last, we consider resource augmentation, i.e. we assume an online algorithm has more resources (more capital on both sides) than an optimal offline algorithm.

\subsection{Randomized algorithms against adaptive adversaries}\label{subsec:random-adaptive}
As shown in recent previous work \cite{avarikioti2018algorithmic}, there is no competitive deterministic algorithm against adaptive adversaries.
In this work, we extend this result and prove there is no competitive randomized algorithm against adaptive adversaries.

\begin{theorem}
    There is no competitive randomized algorithm against adaptive online adversaries.
\end{theorem}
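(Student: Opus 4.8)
The plan is to show that for every randomized online algorithm $\mathrm{ALG}$ and every constant $c$, an adaptive online adversary can force the competitive ratio above $c$. The whole difficulty of the single channel problem is concentrated in a single irreversible decision---whether to commit scarce capital to one large transaction before seeing what follows---so I would design an instance in which exactly one such decision is offered and both possible answers are punished. Since the adversary must react to the algorithm's realized (random) choice, this is precisely an adaptive online adversary.

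For the construction I would take a channel $\langle C; C\rangle$ with $C$ large, where all transactions push capital in the same direction so that only the bound $C_r=C$ is ever active and the balance starts at $0$. The adversary first issues a single transaction of value $C$ and then reads off the realized decision of $\mathrm{ALG}$. If $\mathrm{ALG}$ accepts, the balance equals $C$ and no further transaction fits, so the adversary appends $C$ transactions of value $1$, each of which $\mathrm{ALG}$ is forced to reject; here $\mathrm{ALG}$ accepts one transaction while $OPT$ rejects the large one and accepts all $C$ unit transactions, so $OPT=C$. If instead $\mathrm{ALG}$ rejects, the adversary stops immediately, leaving $OPT=1$ and $\mathrm{ALG}=0$.

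For the analysis, let $p$ denote the probability that $\mathrm{ALG}$ accepts the initial transaction. Because the adversary is online and adaptive, it reacts to the actual coin flips, so the generated sequence and both benchmarks are governed by this single event, giving $\E[\mathrm{ALG}]=p$ and $\E[OPT]=pC+(1-p)$. Hence $\E[OPT]/\E[\mathrm{ALG}] = C + (1-p)/p \ge C$ for every $p\in(0,1]$, while $p=0$ yields $\E[\mathrm{ALG}]=0<\E[OPT]$ and thus an unbounded ratio. Choosing $C>c$ contradicts $c$-competitiveness for any fixed $c$, and letting $C\to\infty$ rules out every constant.

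The main obstacle is not the combinatorics but the formalization in the randomized setting: one has to pin down the right notion of competitiveness against an adaptive online adversary (the expected profit of $\mathrm{ALG}$ against the offline optimum of the generated, and therefore random, sequence) and treat the degenerate case $p=0$ separately, since there the ratio is infinite rather than merely large. I would also verify that adaptivity is genuinely needed: against an oblivious adversary the fixed sequence ``one transaction of value $C$ followed by $C$ unit transactions'' would let a randomized algorithm reject the first transaction with positive probability and still collect profit $C$, so the impossibility really relies on the adversary observing $\mathrm{ALG}$'s decision before extending the sequence.
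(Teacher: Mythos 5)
Your construction does not prove the stated theorem: the benchmark you analyze is not the adaptive \emph{online} adversary of Definition~2 but the adaptive \emph{offline} adversary of Definition~3. You explicitly say the right notion is ``the expected profit of $\mathrm{ALG}$ against the offline optimum of the generated sequence,'' and your case analysis relies on it: in the branch where $\mathrm{ALG}$ accepts the value-$C$ transaction, your benchmark retroactively \emph{rejects} it (to collect the $C$ unit transactions), while in the branch where $\mathrm{ALG}$ rejects, your benchmark \emph{accepts} it. An adaptive online adversary cannot do this: by Definition~2 it must serve each transaction immediately, i.e.\ commit to its own accept/reject decision on the value-$C$ transaction \emph{before} observing $\mathrm{ALG}$'s coin flip on it, and the ratio is measured against the adversary's own online profit, not against $OPT$ of the realized sequence. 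If your adversary commits to rejecting the first transaction, then against an algorithm that also always rejects it both profits are $0$ and nothing is violated; if it commits to accepting, then in the branch where $\mathrm{ALG}$ also accepts the ratio collapses to $1$. Since the adaptive offline adversary is the strongest of the three models, impossibility against it is a strictly \emph{weaker} statement than the theorem claims; the implication goes in the wrong direction.

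The gap is fixable, but it requires machinery you do not supply. A direct argument must let the adversary hedge on its committed decision as a function of $p=\Pr[\mathrm{ALG}\text{ accepts }(l,r,C)]$: with capital $\langle C;C\rangle$, $C=n$, if $p$ is large the adversary commits to rejecting and punishes with $n$ unit transactions (so $\E[\mathrm{ADV}]=pn$ versus $\E[\mathrm{ALG}]=p$); if $p$ is small it commits to accepting and then issues transactions of value $2C$ in alternating directions, which the adversary (at balance $C$) accepts forever but which $\mathrm{ALG}$ (stuck at balance $0$) can never accept, giving $\E[\mathrm{ADV}]=n$ versus $\E[\mathrm{ALG}]\le pn$; taking $n$ large relative to the target ratio $c$ covers all $p$. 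The paper avoids all of this with a short reduction: a $c$-competitive randomized algorithm against adaptive online adversaries is in particular $c$-competitive against oblivious ones, so Theorems~2.1 and~2.2 of \cite{Ben-David90onthe} yield a $c^2$-competitive \emph{deterministic} algorithm, contradicting the deterministic impossibility (Theorem~19 of \cite{avarikioti2018algorithmic}). You should either adopt that reduction or carry out the hedged construction above; what you wrote proves only the adaptive offline case.
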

\begin{proof}
	Assume that there is a competitive randomized algorithm against any adaptive online adversary.
	Theorem 2.1 and Theorem 2.2 in \cite{Ben-David90onthe} state that the existence of a randomized $c$-competitive algorithm against any adaptive online adversary and the existence of a randomized $d$-competitive algorithm against any oblivious adversary imply the existence of a $cd$-competitive randomized algorithm against any adaptive offline adversary. Additionally, this implies that there is a $cd$-competitive deterministic algorithm. Since a $c$-competitive algorithm against any adaptive online adversary is $c$-competitive against any oblivious adversary, the existence of a $c$-competitive randomized algorithm against any adaptive online adversary implies the existence of a $c^2$-competitive deterministic algorithm.
	From Theorem 19 in \cite{avarikioti2018algorithmic} we know that there are no deterministic online algorithms against adaptive online adversaries which contradicts our assumption and proves that there is no competitive randomized algorithm.\hfill \qed
\end{proof}

From here on, we discuss algorithms against oblivious adversaries since there are no algorithms against stronger adversaries. 



\subsection{Algorithms with advice}\label{subsec:advice}
In this subsection, we examine algorithms with advice.
We refer to an algorithm as \emph{optimal}, if the competitive ratio is one, hence it performs as well as the optimal offline algorithm. Next, we study optimal deterministic algorithms with advice, i.e. we examine how many advice bits are necessary for an optimal deterministic algorithm.
We prove a tight lower bound of $n-2$ advice bits for optimal deterministic algorithms. Later, we study the relation between the necessary number of advice bits and the competitiveness of an algorithm.

\begin{theorem}
    There exists an optimal deterministic algorithm with $n-2$ advice bits.
\end{theorem}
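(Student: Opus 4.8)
The plan is to transmit, as advice, the accept/reject decisions of a fixed optimal offline solution for all transactions \emph{except the last two}, and to let the algorithm resolve the final two transactions greedily. Concretely, fix an optimal offline strategy $S^* \in \{0,1\}^n$ for the instance and use the $n-2$ advice bits to encode $S^*(1), \dots, S^*(n-2)$. The online algorithm accepts transaction $i$ exactly as the advice dictates for $i \le n-2$, and for $i \in \{n-1, n\}$ it accepts $t_i$ iff doing so keeps the running balance $\sum_{k=1}^{i} S(k) v_k$ within $[-C_l, C_r]$ (the natural greedy rule). Since the algorithm follows $S^*$ verbatim on the first $n-2$ transactions, after step $n-2$ its running balance coincides with that of $S^*$, so only the last two steps require justification.

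The crux is the following lemma: \emph{from any reachable balance, the forward greedy rule accepts a maximum number of the two remaining transactions.} I would prove this by a short case analysis on how many of $t_{n-1}, t_n$ can be accepted from the current balance. If both can be accepted in sequence, then in particular $t_{n-1}$ is individually feasible, so greedy accepts it and then accepts $t_n$, matching the optimum of two. If at most one can ever be accepted, greedy still attains one whenever that is possible: it accepts $t_{n-1}$ if feasible, and otherwise falls through to $t_n$ and accepts it if feasible; the only way greedy ends with zero is if neither is individually feasible, in which case the optimum is also zero. The single delicate point is the scenario where greedy accepts $t_{n-1}$ and thereby blocks $t_n$ while the optimum accepts only $t_n$ -- but both completions have value one, so greedy is never strictly worse. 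This argument is insensitive to the signs of $v_{n-1}, v_n$, so it covers transactions in either direction uniformly.

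With the lemma in hand, optimality follows by a standard decomposition. Because $S^*$ is globally optimal, its decisions on $t_{n-1}, t_n$ form a \emph{best} completion of its own length-$(n-2)$ prefix; otherwise we could swap in a better completion and contradict the optimality of $S^*$. The algorithm reproduces exactly that prefix and, by the lemma, also realizes a best completion of it. Hence the number of transactions the algorithm accepts equals the prefix count of $S^*$ plus the optimal completion value, which equals $|S^*| = OPT$, giving competitive ratio one using $n-2$ advice bits.

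I expect the main obstacle to be the clean verification of the two-step greedy lemma -- specifically the bookkeeping in the ``exactly one acceptable'' case and confirming that no choice of signs for $v_{n-1}, v_n$ lets the optimum beat greedy by accepting a different single transaction. Everything else (the prefix decomposition and the equality of running balances) is routine.
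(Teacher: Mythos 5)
Your proposal is correct and follows essentially the same approach as the paper: advice bits encode an optimal strategy's decisions on the first $n-2$ transactions, and the last two are handled greedily. The paper's own proof is just a terser version of yours; your two-step greedy lemma and prefix-decomposition argument spell out exactly the details the paper leaves implicit.
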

\begin{proof}
    The advice (oracle) gives the strategy for the first $n-2$ transactions. For the last two transactions, the algorithm proceeds greedily. If the optimal offline algorithm accepts none of the last two transactions, none of them can be accepted. If one or two transactions can be accepted, our algorithm will accept and reject them accordingly such that it is optimal.\hfill \qed
\end{proof}
    
\begin{theorem}
    There is no optimal deterministic algorithm with less than $n-2$ advice bits.
\end{theorem}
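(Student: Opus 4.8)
The plan is to prove the lower bound by an information-theoretic counting argument over online algorithms. I view $b$ advice bits as the oracle selecting one of at most $2^b$ deterministic online algorithms, each a fixed mapping from a transaction prefix to an accept/reject decision; optimality of the advice-based algorithm means that for every input some selected algorithm attains the optimum on it. I will construct a family $\mathcal{I} = \{I_x : x \in \{0,1\}^{n-2}\}$ of $2^{n-2}$ inputs, all sharing the same capital assignment $\langle C_l; C_r\rangle$, such that no single deterministic online algorithm is optimal on two distinct members. Then each algorithm in the pool is optimal on at most one member of $\mathcal{I}$, so covering all $2^{n-2}$ inputs forces $2^b \ge 2^{n-2}$, i.e. $b \ge n-2$, contradicting $b < n-2$.

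To guarantee that no algorithm handles two inputs, I give $\mathcal{I}$ the structure of a complete binary tree of depth $n-2$: transaction $t_j$ of $I_x$ will depend only on the prefix $x_1 \dots x_{j-1}$, so that two inputs $I_x, I_{x'}$ whose indices first differ at position $i$ agree on $t_1, \dots, t_i$. The property I must engineer is a \emph{flip} at each branch: every optimal solution of $I_x$ takes action $x_i$ on $t_i$, while every optimal solution of $I_{x'}$ takes the opposite action. Granting this, any deterministic online algorithm, having seen the identical prefix $t_1, \dots, t_i$, makes the same decision on $t_i$ for both inputs and is therefore suboptimal on at least one of them, which is exactly the pairwise incompatibility the counting step needs.

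The heart of the argument is the gadget realizing the flip, and a single-bit instance already shows it is possible. With capital $\langle 0; 1\rangle$ and $t_1 = (s,r,0.6)$, the completion with values $0.5, 0.5$ makes rejecting $t_1$ the unique optimal action (accepting $t_1$ caps the count at $1$, whereas rejecting it attains $2$), while the completion with values $0.3, 0.1$ makes accepting $t_1$ the unique optimal action (all three fit, giving count $3$). Since $t_1$ is identical in both, no advice-free algorithm is optimal on both, matching the bound $n-2=1$ for $n=3$. The plan is to nest such gadgets: $t_i$ is the branch point for bit $i$, and the later transactions, chosen according to $x_i, x_{i+1}, \dots$, reveal for each passed branch whether acceptance or rejection was forced, with the last two transactions $t_{n-1}, t_n$ supplying the final resolution, mirroring the greedy tail of the matching $n-2$ upper bound.

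The main obstacle will be composition. On a single channel all accepted transactions share one running balance, so the gadgets are not independent; I must choose the values, possibly using signed bidirectional transactions to re-center the balance after each gadget, so that the forced action at every branch $i$ is genuinely determined by its own bit $x_i$ and is disturbed neither by earlier nor later choices, all while encoding $n-2$ bits with exactly $n$ transactions and keeping the optimal action at each branch unique. Establishing this uniqueness \emph{simultaneously} across all $2^{n-2}$ members of the tree, rather than for a single isolated bit, is where the real work lies.
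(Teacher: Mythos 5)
Your proposal, as written, has a genuine gap: everything rests on the existence of the nested family of $2^{n-2}$ instances with a forced, \emph{unique} optimal action at every branch point, and you never construct it --- you explicitly defer it (``where the real work lies''). This is not deferred bookkeeping; it is the proof. Moreover, the difficulty is real: the natural composition in which all transactions flow in one direction against a single budget provably fails already at two bits. Indeed, since feasibility there is just a knapsack constraint, forcing ``reject, reject'' on the $(0,0)$ leaf requires every pair containing $t_1$ to be infeasible while $\{t_3,t_4\}$ is feasible, which forces $a>\tfrac12$ for the value $a$ of $t_1$; forcing ``accept $t_1$, reject $t_2$'' on the $(1,0)$ leaf forces $b_1>a$ by an exchange argument; and then no feasible set can contain both $t_1$ and $t_2$ (as $a+b_1>1$), making the $(1,1)$ leaf impossible. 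So your suspicion that bidirectional, state-resetting transactions are essential is correct, and a two-bit bidirectional gadget does exist --- but designing the general recursion and verifying uniqueness simultaneously across all $2^{n-2}$ leaves is exactly the work that is absent. What your proposal actually establishes is only the one-bit case.

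The good news is that the one-bit case is all the theorem needs, and that is precisely the paper's proof. An algorithm reading fewer than $n-2$ advice bits must read zero bits when $n=3$; hence a single pair of length-$3$ inputs that agree on the first transaction but whose (unique) optima make opposite decisions on it already yields the contradiction. Your gadget --- sender budget $1$, first value $0.6$, completions $0.5,0.5$ versus $0.3,0.1$ --- is exactly such a pair, modulo a notational slip: under the paper's convention the capital must sit on the sender's side of the channel, i.e.\ $\langle 1;0\rangle$ rather than $\langle 0;1\rangle$, otherwise $t_1$ is never feasible. (The paper uses capital $\langle 2;1\rangle$ with the sequences $(l,r,2),(l,r,1),(l,r,1)$ and $(l,r,2),(r,l,3),(r,l,3)$, exploiting both directions.) Had you observed that ``fewer than $n-2$ bits'' collapses to ``no advice at $n=3$,'' your gadget would have finished the proof; by instead targeting the stronger per-$n$ statement (at least $n-2$ bits for \emph{every} $n$) you took on a substantially harder claim and left its core unproven.
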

\begin{proof}
    Assume an algorithm $ALG$ reading at most $n-2$ advice bits exists. Then, $ALG$ reads no advice bits for $n=3$. We construct following sequences of transactions with initial capital distribution $\langle 2;1 \rangle $:
    \begin{enumerate}
        \item ($l,r,2$),($l,r,1$),($l,r,1$)
        \item ($l,r,2$),($r,l,3$),($r,l,3$)
    \end{enumerate}
    For the second sequence of transactions, the optimal offline algorithm accepts the first transaction, whereas, for the first sequence, the first transaction is rejected. Since $ALG$ is deterministic, the decision made on the first transaction will always be the same. Consequently, $ALG$ cannot be optimal.\hfill \qed
\end{proof}    

We notice that the number of advice bits necessary for optimal algorithms is very high. Therefore, we consider, next, algorithms with competitive ratio greater than one, and show a relation between the competitive ratio and the required advice bits of an algorithm.

\begin{theorem}
    An algorithm with strictly less than $f(n)$ advice bits has a competitive ratio of $\Omega(\frac{n}{f(n)})$.
\end{theorem}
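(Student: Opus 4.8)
The plan is to prove the statement by a counting (halving) argument over advice strings, combined with a ``catastrophic chain'' lower-bound instance that chains scaled copies of the $\langle 2;1\rangle$ gadget from the previous theorem. First I would invoke the standard correspondence between advice and determinism: an online algorithm reading at most $b$ advice bits is equivalent to a family of $2^{b}$ deterministic online algorithms $\{A_\phi\}_{\phi\in\{0,1\}^{b}}$, where on any fixed input the algorithm coincides with the single $A_\phi$ selected by the oracle string $\phi$. Hence it suffices to construct one fixed input on which \emph{every} one of these $2^{b}$ deterministic algorithms performs poorly: whatever advice is supplied, the algorithm behaves like one of them.

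The heart is the instance. I would build a sequence of length $n$ organised into $m=\Theta(n)$ \emph{decision points}, each occupying $O(1)$ transactions, with three properties: (i) an offline optimum that plays correctly accepts exactly one transaction at every point, so $OPT=\Theta(m)=\Theta(n)$; (ii) at each point the algorithm must irrevocably accept or reject a transaction whose correct handling is revealed only by the immediately following transaction, so the decision cannot be recomputed from the past (the decision transaction is identical across the two ``worlds'', which differ only in the revealing transaction, exactly as the two sequences of the previous theorem agree on $(l,r,2)$); and (iii) a single wrong decision is \emph{fatal}, in that from the resulting capital distribution no later transaction of the fixed sequence can be accepted. Property (iii) is to be enforced through the abacus dynamics: a wrong move commits capital to one side that the fixed continuation never refunds, leaving the erring algorithm permanently short of the capital each subsequent point demands.

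The counting part then proceeds by constructing the instance greedily while simulating all $2^{b}$ deterministic algorithms offline. I maintain the set $S_j$ of algorithms that have not yet made a fatal decision, with $|S_0|=2^{b}$. At point $j$ each surviving algorithm takes one of the two actions on the decision transaction; I fix the correct outcome (equivalently, which revealing transaction to append) to be the action taken by the \emph{minority} of $S_{j-1}$, so that at least half of $S_{j-1}$ acts incorrectly and is eliminated, giving $|S_j|\le |S_{j-1}|/2$ and hence $|S_{b+1}|<1$. Thus after at most $b+1\le f(n)$ points every deterministic algorithm, and therefore the advice algorithm on this fixed input, has made a fatal decision, after which it accepts nothing; it accepts at most $b+1$ transactions in total. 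Dividing, $OPT/ALG\ge \Theta(m)/(b+1)=\Omega(n/f(n))$ because $b<f(n)$ and $m=\Theta(n)$.

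The main obstacle will be realising property (iii) — a genuinely catastrophic decision against a \emph{fixed} continuation. Since the instance may not branch on the algorithm's behaviour, the gadget must be engineered so that the ``bad'' capital distribution cannot serve the entire remaining suffix no matter which later world-bits the greedy construction commits, while the ``good'' distribution serves all of it; choosing the channel's total capital and the transaction values so that this holds simultaneously at every point, with only $O(1)$ transactions per point so that $m=\Theta(n)$, is the delicate step. A secondary verification is that the greedily committed sequence still admits the claimed one-accept-per-point offline optimum, which should follow from designing each point to be individually satisfiable along the golden path.
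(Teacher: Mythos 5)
Your proposal has a genuine gap, and it sits exactly where you flag it: property (iii), the two-sided fatal gadget, is never constructed, and without it the halving argument collapses. The halving step needs the adversary to be able to fatally punish \emph{whichever} action the majority takes, i.e., both a wrong accept and a wrong reject must be irrecoverable against the same fixed continuation. But in this model feasibility is memoryless --- what an algorithm can accept later depends only on its current capital distribution --- so an errant algorithm is doomed only if it is \emph{permanently displaced} from every distribution the golden path later occupies. This fails in one direction for the very gadget you cite: with $\langle 2;1\rangle$ and the two worlds $(l,r,2),(l,r,1),(l,r,1)$ versus $(l,r,2),(r,l,3),(r,l,3)$, an algorithm that wrongly accepts $(l,r,2)$ in the first world ends at $\langle 0;3\rangle$, which is exactly where the golden path (reject, then accept both $(l,r,1)$'s) also ends; it has lost one transaction and is fully recovered for any continuation. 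More generally, any algorithm holding all capital on one side can rejoin a full-capital loop after missing at most one transaction, so a ``wrong reject'' against an all-on-one-side entry distribution is never fatal either, since the rejector sits at the golden path's own entry distribution. Hence at each decision point the adversary can fatally eliminate only one of the two actions; if the majority of surviving algorithms takes the other action, the bound $|S_j|\le|S_{j-1}|/2$ is simply false, and, for instance, the greedy algorithm is $O(1)$-competitive against a chain of $\langle 2;1\rangle$-type gadgets (it tracks the golden distribution at every gadget boundary and loses at most one transaction per gadget). One could weaken (iii) to ``the final loop is forever unreachable'' and try to engineer two-sided permanent displacement with super-increasing value scales, but that is precisely the delicate construction your sketch defers, so the proof is incomplete at its core.

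The paper's proof avoids all of this by neither revealing per decision nor adapting to the algorithms. It fixes in advance the family $I_x=(l,r,1),(l,r,2),\dots,(l,r,2^{f(n)-1}),(r,l,x+1),loop$ with initial distribution $\langle 2^{f(n)};1\rangle$: the $f(n)$ binary-scaled prefix transactions encode all decisions \emph{before} any reveal, the single transaction $(r,l,x+1)$ determines the unique correct prefix subset (transfer exactly $x$), and the spare unit on the right guarantees that every incorrect transfer $y\neq x$ --- too little or too much, whether the reveal is then accepted or rejected --- leaves capital split across both sides, from which the full-capital loop is infeasible. The lower bound is then a pigeonhole over the $2^{f(n)}$ instances (fewer than $f(n)$ advice bits yield fewer than $2^{f(n)}$ prefix behaviors, so some $I_x$ is served by a wrong transfer and the ratio is at least $\frac{n-f(n)}{f(n)+1}$), rather than a halving over algorithms. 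If you wish to salvage your adaptive approach, the paper's mechanism --- exact-split punishment via a full-capital loop, with binary scales making displacements uncorrectable --- is the ingredient your gadget would have to reproduce.
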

\begin{proof}
    Let the initial capital distribution be $\langle 2^{f(n)};1 \rangle$ and $f(n)$ be a function such that $f(n) \in o(n)$. $loop$ denotes transactions that move the total capital from left to right until there are $n$ transactions. We define a sequence of transactions as $$I_x = (l,r,1), \dots, (l,r,2^{f(n)-1}), (r,l,x+1), loop$$ such that the set of instances is defined as $$\mathcal{I} = \{I_x \mid x \in \{0, \dots, 2^{f(n)}-1\}\}$$
    Similar to the proofs before, an algorithm must accept $loop$, otherwise the competitive ratio is $c \geq \frac{n-f(n)}{f(n)+1}$. Assume that there is an algorithm $A$ reading at most $k < f(n)$ advice bits. For instance $I_x$, $A$ must transfer $x$ capital with the transactions of the prefix to the right side such that it can accept $loop$. The number of strategies the advice can describe is smaller than the number of instances in the constructed input set. Thus, we know that for two different inputs the same strategy is used. We conclude, $A$ cannot accept $loop$ for one element in $\mathcal{I}$ and has competitive ratio $c \geq \frac{n-f(n)}{f(n)+1}$.\hfill \qed
\end{proof}

\subsection{Randomized algorithms against oblivious adversaries}\label{subsec:random-oblivious}
In this subsection, we study the competitive ratio of randomized algorithms. To that end, we use the results from section \ref{subsec:advice} to provide lower bounds on the competitive ratio of randomized algorithms, since algorithms with advice and randomized algorithms are closely related. According to  \cite{Borodin:1998:OCC:290169}, Yao's Principle \cite{yao1977probabilistic} bounds the expected competitive ratio of randomized algorithms from below with
$$R \geq max\left(min_i \frac{\E_{y(j)}[OPT(\sigma_j)]}{\E_{y(j)} [ALG_{i}(\sigma_j)]}, min_i \frac{1}{\E_{y(j)}[\frac{ALG_i(\sigma_j)}{OPT(\sigma_j)}]}\right)$$
where both arguments in the maximum function are proven lower bounds. $\E[X]$ is the expectation of a random variable $X$,  $OPT$ is the optimal offline algorithm and $ALG_i$ is the optimal strategy for the input sequence $\sigma_i$. Moreover, $y(j)$ denotes the distribution of $j$ and $\sigma_j$ is an input sequence for all $j$. We denote by $c$ the expected competitive ratio and by $I_x$ the input sequence. We change the other variables accordingly.

\begin{theorem}\label{thm:random-competitive}
    Every randomized algorithm is $\Omega(\frac{n}{f(n) + \frac{n}{2^{f(n)}}})$-competitive.
\end{theorem}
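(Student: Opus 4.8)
The plan is to reuse the adversarial instance family from the preceding advice lower bound and feed it into Yao's principle. Concretely, I would fix the initial capital to $\langle 2^{f(n)};1\rangle$ and take the same set $\mathcal{I}=\{I_x \mid x\in\{0,\dots,2^{f(n)}-1\}\}$, where each $I_x$ begins with the fixed prefix $(l,r,1),\dots,(l,r,2^{f(n)-1})$, continues with the single transaction $(r,l,x+1)$, and ends with $loop$. The crucial structural fact is that the $f(n)$ prefix transactions have values $1,2,4,\dots,2^{f(n)-1}$, so the subset an algorithm accepts can realise any net right-transfer $S\in\{0,\dots,2^{f(n)}-1\}$, while $loop$ can be accepted only if all capital sits on the left when it begins, i.e. only if the right side has first been emptied to $0$.

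First I would pin down the offline optimum: knowing $x$, $OPT$ accepts the unique prefix subset summing to $x$, then accepts $(r,l,x+1)$ (legal since the right side holds $1+x\ge x+1$), which leaves the distribution $\langle 2^{f(n)}+1;0\rangle$, and therefore accepts all of $loop$. Hence $OPT(I_x)\ge n-f(n)$ for every $x$. Next comes the heart of the argument: since the prefix is \emph{identical} across all instances, any deterministic algorithm makes exactly the same accept/reject decisions on it and thus commits, before ever seeing $x$, to one fixed right-transfer $S_0$. To later empty the right side it needs $S_0=x$ after accepting $(r,l,x+1)$; but $S_0$ is fixed while $x$ ranges over $2^{f(n)}$ values, so a deterministic algorithm can meet this for at most a single $x$. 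On that one instance it may collect up to $n$ accepted transactions, while on each of the remaining $2^{f(n)}-1$ instances it cannot accept any transaction of $loop$ and is capped at $f(n)+1$ accepted transactions (the prefix plus possibly the middle transaction).

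Finally I would put a uniform distribution on $\mathcal{I}$ and invoke the first argument of the stated Yao bound. Writing $N=2^{f(n)}$, the expected optimum is $\E_{x}[OPT(I_x)]\ge n-f(n)$, whereas for every deterministic $ALG_i$ we have $\E_{x}[ALG_i(I_x)]\le \frac1N\big(n+(N-1)(f(n)+1)\big)\le \frac{n}{2^{f(n)}}+f(n)+1$. Dividing, and using $f(n)\in o(n)$ so that $n-f(n)=\Theta(n)$, yields $c=R\ge \frac{n-f(n)}{\,n/2^{f(n)}+f(n)+1\,}=\Omega\!\left(\frac{n}{f(n)+\frac{n}{2^{f(n)}}}\right)$, as claimed.

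I expect the main obstacle to be the clean justification that a deterministic algorithm behaves identically on the common prefix and therefore wins $loop$ on at most one instance; this is exactly what converts the ``one good instance, many bad instances'' picture into the $+\,n/2^{f(n)}$ term in the denominator. A secondary point requiring care is applying Yao's inequality in the correct direction: one must use the ratio-of-expectations form (the first argument of the maximum), which is valid here precisely because the bad-case profit is a deterministic constant $O(f(n))$ and only a single instance contributes the $\Theta(n)$ mass.
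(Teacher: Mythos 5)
Your proposal is correct and follows essentially the same route as the paper: the identical instance family $\mathcal{I}$ with capital $\langle 2^{f(n)};1\rangle$, the observation that a deterministic algorithm commits to one fixed prefix transfer and hence accepts $loop$ on at most one instance (at most $f(n)+1$ transactions on the rest), and Yao's principle under the uniform distribution, yielding the same bound $\frac{n-f(n)}{f(n)+1+n/2^{f(n)}}$. The only cosmetic difference is that you invoke the ratio-of-expectations form of Yao's bound while the paper starts from the reciprocal expected-ratio form, but after the paper substitutes $OPT(I_j)\geq n-f(n)$ the two calculations coincide, so neither form is obligatory here.
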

\begin{proof}
    Let $\langle 2^{f(n)};1 \rangle $ be the initial capital distribution and $$\mathcal{I} = \{I_x \mid x \in\ \{0,\dots,2^{f(n)}-1\}\}$$ the set of sequences of transactions for $f(n) \in o(n)$ where $$I_x = p, (r,l,x+1), loop$$ and $p = (l,r,1), (l,r,2), \dots, (l,r,2^{f(n)-1})$. $loop$ is the movement of the capital back and forth. Then, we define the set of strategies as $\mathcal{A} = \{A_x \mid x \in \{0, \dots, 2^{f(n)}-1\}\}$. 
    A non-optimal strategy accepts at most $f(n)$ transactions, whereas the optimal strategy accepts at most $n$ and at least $n-f(n)$ transactions. We will refer to this in the following calculation as $(*)$. 
    \begin{align*}
        c & \geq min_i \frac{1}{\E_{y(j)}[\frac{A_i(I_j)}{OPT(I_j)}]} & \mbox{Yao's principle} & \\
        & = \frac{1}{max_i \E_{y(j)}[\frac{A_i(I_j)}{OPT(I_j)}]}&\\
        & \geq \frac{1}{max_i \E_{y(j)}[\frac{A_i(I_j)}{n-f(n)}]} & \mbox{$OPT(I_j) \geq n-f(n)$}&\\
        & = \frac{n-f(n)}{max_i \E_{y(j)}[A_i(I_j)]}&\\
        & \geq \frac{n-f(n)}{\sum\limits_{k=0, k \neq i}^{2^{f(n)}-1} (\frac{1}{2^{f(n)}}(f(n)+1)) + \frac{n}{2^{f(n)}}}&\mbox{$y(j)$ uniform and $(*)$}&\\
        & = \frac{n-f(n)}{(2^{f(n)}-1)(\frac{f(n)+1}{2^{f(n)}}) + \frac{n}{2^{f(n)}}}&\\
        & = \frac{n-f(n)}{f(n) + 1 - \frac{f(n)+1}{2^{f(n)}} + \frac{n}{2^{f(n)}}}&\\
        & \geq \frac{n-f(n)}{2(f(n) + \frac{n}{2^{f(n)}})}&\\
    \end{align*}
    Thus, we have shown that $c \in \Omega(\frac{n}{f(n) + \frac{n}{2^{f(n)}}})$.\hfill \qed
\end{proof}

\begin{corollary}
There is no competitive randomized algorithm against oblivious adversaries.
\end{corollary}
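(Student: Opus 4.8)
The plan is to instantiate Theorem~\ref{thm:random-competitive} with a carefully chosen function $f(n)$ and observe that the resulting lower bound on the competitive ratio diverges, so that no constant bound---and hence no competitive algorithm in the usual sense---can exist. Theorem~\ref{thm:random-competitive} tells us that for every admissible $f(n) \in o(n)$, every randomized algorithm has competitive ratio $c \in \Omega\left(\frac{n}{f(n) + n/2^{f(n)}}\right)$. Since this holds for every such $f$, I am free to pick the $f$ that makes the denominator as small as possible, thereby making the lower bound as large as possible.

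First I would minimize the denominator $g(n) = f(n) + n/2^{f(n)}$ over the choice of $f$. The two summands trade off against each other: increasing $f$ grows the first term linearly but shrinks the second term exponentially. Balancing them suggests setting $2^{f(n)} \approx n$, i.e.\ $f(n) = \log_2 n$. With this choice the first term equals $\log_2 n$ and the second equals $n/2^{\log_2 n} = 1$, so $g(n) = \Theta(\log n)$. One should check that the admissibility requirement $f(n) \in o(n)$ is met, which it plainly is since $\log_2 n \in o(n)$.

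Substituting back, Theorem~\ref{thm:random-competitive} yields $c \in \Omega\left(\frac{n}{\log n}\right)$. Because $n/\log n \to \infty$ as $n \to \infty$, there is no constant $c$ that upper-bounds the competitive ratio for all input lengths. By the notion of competitiveness used here, this means no competitive randomized algorithm against oblivious adversaries exists, which is exactly the claim of the corollary.

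I do not expect a genuine obstacle here, since the corollary is essentially an asymptotic reading of the already-established Theorem~\ref{thm:random-competitive}; the only point requiring a little care is the optimization of $f$, where one must confirm that the chosen $f(n) = \log_2 n$ both satisfies $f \in o(n)$ and drives the lower bound to infinity rather than merely to a large constant. Any choice of $f$ growing more slowly, such as a constant, would leave the $n/2^{f(n)}$ term dominant and would \emph{not} certify non-competitiveness, which is why the balancing step is the substantive part of the argument.
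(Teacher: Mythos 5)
Your proof is correct and takes essentially the same route as the paper: the corollary is obtained by instantiating Theorem~\ref{thm:random-competitive} with a suitably chosen $f$, and $f(n)=\log_2 n$ yields the diverging bound $\Omega(n/\log n)$. In fact your write-up is more careful than the paper's one-line proof (which only says ``for $f(n)=o(n)$''), since you correctly observe that $f$ must also tend to infinity for the bound to diverge --- a constant $f$ leaves the term $n/2^{f(n)}$ dominant and gives only a constant lower bound.
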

\begin{proof}
Follows immediately from Theorem \ref{thm:random-competitive} for $f(n)=o(n)$.
\hfill \qed
\end{proof}

\subsection{Resource augmentation}\label{subsec:resource}

In this subsection, we approach the problem from another perspective, and consider a different analysis approach called resource augmentation as described in  \cite{Komm:2016:IOC:3036138}.
In this approach, an online algorithm may have more resources than the optimal offline algorithm in order to improve the performance of an algorithm against an adversary.
In our setting, we allow the online algorithm to have $h \geq 1$ times more capital on both sides. An online algorithm starts with $\langle hC_l;hC_r \rangle $ if the optimal offline algorithm starts with the initial capital distribution $\langle C_l;C_r \rangle $. In this section, we discuss the existence of deterministic algorithms for different values of $h$.

\begin{theorem}
    There is no competitive deterministic algorithm for $h<2$.
\end{theorem}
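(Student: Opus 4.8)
The plan is to reuse the adversarial family underlying Theorem~\ref{thm:random-competitive}, but now to defeat a \emph{deterministic} $h$-augmented algorithm directly rather than through Yao's principle. Fix any $h<2$ and let the offline optimum start from an asymmetric distribution $\langle 2^{f};1\rangle$, so that the augmented algorithm holds $\langle h2^{f};h\rangle$, where $f=f(h)$ is a constant to be tuned to $h$. As in $\mathcal I=\{I_x\}$, the input consists of the doubling prefix $p=(l,r,1),\dots,(l,r,2^{f-1})$, then a single \emph{correction} transaction $(r,l,x+1)$ with $x\in\{0,\dots,2^{f}-1\}$, followed by a long $loop$ of full-capital swings of size $W=2^{f}+1$ that fills the remaining $\Theta(n)$ steps. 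The offline optimum, knowing $x$, selects the prefix subset summing to exactly $x$, lands on an extreme after the correction, and then serves the entire $loop$, so $OPT=\Theta(n)$; the whole difficulty is to show that the online algorithm cannot.

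The key steps, in order, are the following. First I would record, exactly as in the earlier proofs, that the algorithm \emph{must} serve the $loop$: otherwise its profit is $O(f)$ against $OPT=\Theta(n)$ and the ratio $c\ge (n-f)/(f+1)$ is already unbounded. Second, I would characterise the positions from which the $loop$ is serviceable for the augmented window $[-h2^{f},h]$: a full swing of size $W$ is feasible in some direction from position $p$ unless $p$ lies in the open \emph{dead zone} $\bigl((h-1)\cdot\!1-2^{f},\,1-(h-1)2^{f}\bigr)$, whose width is $(2-h)W$ and which is \emph{nonempty precisely when $h<2$}. Third, and this is where determinism is used, the algorithm's position after the prefix is a single fixed value $-L_0$ (the prefix is identical for every $x$), and its only two post-correction options are $-L_0$ (reject) and $-L_0+x+1$ (accept, when $\le h$). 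The adversary then chooses $x$ \emph{after} observing $L_0$ so that \emph{both} options fall inside the dead zone, leaving the algorithm unable to accept any swing of the $loop$. Finally, combining $OPT=\Theta(n)$ with $ALG=O(f)$ gives $c=\Omega(n/f)$, which is unbounded for fixed $f=f(h)$, so no deterministic algorithm is competitive when $h<2$.

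The hard part will be the third step: proving that for the committed value $L_0$ there is always an admissible correction $x\in\{0,\dots,2^{f}-1\}$ that drives \emph{both} the accept- and reject-positions into the dead zone, while the offline player still reaches an exact extreme for that same $x$. This reduces to a slack inequality comparing the augmentation margin $(h-1)W$ against the spacing of reachable positions and the swing size $W$; the calculation is exactly balanced at $h=2$, since there the dead zone closes and a ``move to a wall, then oscillate'' strategy becomes $loop$-feasible from an entire interval, defeating the trap. I would therefore calibrate the asymmetry $2^{f}$ and, if needed, the granularity of the doubling prefix to the given $h$ so that the margin is provably too small on both endpoints; getting this quantitative squeeze tight, and confirming that it is precisely the $h<2$ regime in which it succeeds, is the crux of the argument and the step I expect to require the most care.
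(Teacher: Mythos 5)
Your dead-zone computation is correct, but step three (the trap) provably cannot be completed with this family, and the breakdown happens not at $h=2$ but already at $h=3/2$. The obstruction is the algorithm that rejects the entire prefix and the correction and just sits at position $0$: it is caught in your dead zone $\bigl(h-1-2^{f},\,1-(h-1)2^{f}\bigr)$ only if $(h-1)2^{f}<1$, i.e. $h<1+2^{-f}$, while your adversary needs $f\ge 1$ to have at least two corrections to choose from (for $f=0$ the family is one instance, and greedily accepting $(r,l,1)$ and then the whole loop is $1$-competitive on it, since from $\langle h+1;h-1\rangle$ the swing $(l,r,2)$ is feasible for every $h\ge 1$). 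Both requirements hold simultaneously only when $h<3/2$. For any $h\ge 3/2$ and any $f\ge 1$ the surplus $(h-1)2^{f}\ge 1$ lets the ``do nothing'' algorithm serve the first swing $(l,r,2^{f}+1)$ from its initial position, hence the entire loop, so it is $\frac{n}{n-f-1}$-competitive on every $I_x$ and your construction proves nothing in that regime. Calibration cannot rescue the architecture either: with a general split $\langle A;B\rangle$, swing size $W\le A+B$, prefix granularity $v$ and corrections $(r,l,jv+B)$ (the offline player must land on an extreme), trapping position $0$ forces $(h-1)A<B$, while trapping every reachable prefix position and having a damaging correction for each forces the prefix range to exceed $(h-1)(A+B)$ yet stay below $W-hB\le A-(h-1)B$; together these give $2(h-1)/(2-h)<A/B<1/(h-1)$, which is satisfiable exactly when $h<3/2$. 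So your guess that the squeeze closes precisely at $h=2$ is wrong. (A smaller slip: even for $h<3/2$, when the algorithm's prefix position is $0$ you cannot put \emph{both} options in the dead zone; the adversary must instead pick $x\ge 1$ so that accepting the correction is infeasible, the right side holding only $h<x+1$.)

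The paper reaches all $h<2$ by abandoning the fixed family with a common loop and instead forking the \emph{suffix} on the algorithm's first decision, which an oblivious adversary may do against a deterministic algorithm by simulation. Offline starts at $\langle C;0\rangle$, online at $\langle 2C-\epsilon;0\rangle$, and the first transaction is the pill $(l,r,C-\frac{\epsilon}{2})$, which splits the online capital exactly in half. If the algorithm rejects the pill, the suffix is a stream of $(r,l,1)$ transactions, none of which it can accept because its right capital is still $0$ (this is where starting with \emph{zero} on the right is essential; your start $\langle h2^{f};h\rangle$ leaves the algorithm enough right-side capital to escape such a punishment). If it accepts the pill, the suffix is a loop of swings of size $C$, none of which it can accept because each of its sides now holds only $C-\frac{\epsilon}{2}<C$. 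Either way the algorithm accepts $O(1)$ transactions while the offline optimum, making the opposite pill decision, accepts $\Theta(n)$; letting $C\to\infty$ pushes $h=2-\epsilon/C$ arbitrarily close to $2$. Two instances and one balancing transaction replace the whole advice-style counting machinery, and that substitution, not a sharper version of your squeeze, is what the proof requires.
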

\begin{proof}
    Let the initial capital distribution for the online algorithm be $\langle hC;0 \rangle = \langle 2C-\epsilon;0 \rangle$ such that $h = 2 - \frac{\epsilon}{C}$.
    Then, we define the sequences
    \begin{align*}
        &(l,r,C-\frac{\epsilon}{2}), (r,l,1), (r,l,1), \dots, (r,l,1) \\
        &(l,r,C-\frac{\epsilon}{2}), (l,r,C),(r,l,C), \dots, (l,r,C)
    \end{align*}
    Assume there is a competitive algorithm that rejects the first transaction. Then, none of the consecutive transactions with low value (first sequence) can be accepted. This contradicts the assumption and implies that if a competitive deterministic algorithm exists, the first transaction must be accepted.
    Suppose there is a competitive algorithm that accepts the first transaction. Then, the capital distribution after accepting the first transaction is $\langle C-\frac{\epsilon}{2};C-\frac{\epsilon}{2} \rangle$. Then, the following transactions in the second sequence cannot be accepted, since the the movement of $C$ back and forth is not possible anymore. This contradicts the assumption that there is a competitive deterministic algorithm that accepts the first transaction. We conclude, that there is no deterministic algorithm for $h = 2 - \frac{\epsilon}{C}$. For $C=n$, $C=2^n$, or even bigger $C$, $h$ goes to $2$ from below. Thus, there is no competitive deterministic online algorithm for any $h<2.$\hfill \qed
\end{proof}

\subsection{Minimizing the number of rejected transactions}
So far, we defined the competitiveness as the ratio between the number of accepted transactions of the optimal offline algorithm and an online algorithm, thus as an online maximisation problem. In this section, we examine the problem from another point of view; defining a minimization problem. 
To this end, we define the competitive ratio of an algorithm, denoted by $c$, as the value that upper bounds the ratio between the  cost of the online algorithm, $ALG(I_x)$, and the cost of the optimal offline solution, $OPT(I_x)$, for any input sequence $I_x$, i.e.,
$$\forall x,~c \geq \frac{ALG(I_x)}{OPT(I_x)}$$
The cost, in both cases, represents the number of rejected transactions.

Similarly to sections \ref{subsec:advice} and \ref{subsec:random-oblivious}, we first lower bound the competitive ratio for a given upper bound on the advice bits and then use this result to show there is no competitive randomized algorithm against oblivious adversaries.

\begin{theorem}\label{minRejectedAdvice}
    An algorithm with strictly less than $f(n)$ advice bits has a competitive ratio of $\Omega(\frac{n}{f(n)})$.
\end{theorem}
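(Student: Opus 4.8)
The plan is to recycle the hard instance family from the advice lower bound of Section~\ref{subsec:advice} and simply re-read the gap between the optimum and an advice-limited algorithm in terms of \emph{rejected} rather than \emph{accepted} transactions. Concretely, I would fix the initial capital distribution $\langle 2^{f(n)};1 \rangle$ and take the same set $\mathcal{I} = \{I_x \mid x \in \{0,\dots,2^{f(n)}-1\}\}$, where $I_x = (l,r,1),\dots,(l,r,2^{f(n)-1}),(r,l,x+1),loop$ and $loop$ pads the sequence up to $n$ transactions by transferring the whole capital across the channel back and forth. As already argued for this family, to accept $loop$ on instance $I_x$ an algorithm must have shifted \emph{exactly} $x$ units of capital to the right during the $f(n)$-transaction prefix (so that the pivot $(r,l,x+1)$ empties one side), and any algorithm that fails to achieve this can accept at most the $f(n)+1$ prefix-and-pivot transactions.

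The core step is the counting argument. An algorithm reading $k < f(n)$ advice bits has at most $2^{k} < 2^{f(n)} = |\mathcal{I}|$ distinct advice strings; since the prefix is identical across all instances, the amount of capital the algorithm moves right during the prefix is a function of the advice string alone. Hence the set $Y$ of achievable right-shifts has $|Y| \le 2^{k} < 2^{f(n)}$, so there is some target $x^{*} \in \{0,\dots,2^{f(n)}-1\}$ with $x^{*} \notin Y$. On instance $I_{x^{*}}$, no choice of advice can set up the configuration required to accept $loop$, so the algorithm rejects all but at most $f(n)+1$ transactions, giving $ALG(I_{x^{*}}) \ge n - f(n) - 1$.

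Finally I would bound the optimum and assemble the ratio. The optimal offline algorithm knows $x^{*}$, accepts precisely the prefix transactions corresponding to the binary digits of $x^{*}$, then accepts the pivot and the entire $loop$; it rejects only the prefix transactions outside that binary expansion, so $OPT(I_{x^{*}}) \le f(n)$. Combining the two estimates yields $c \ge \frac{ALG(I_{x^{*}})}{OPT(I_{x^{*}})} \ge \frac{n - f(n) - 1}{f(n)} = \Omega\!\left(\frac{n}{f(n)}\right)$, where in the degenerate case $OPT(I_{x^{*}})=0$ (which occurs only for $x^{*}=2^{f(n)}-1$) the ratio is unbounded and the bound holds trivially.

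I expect the main obstacle to be pinning down the structural fact that shifting the \emph{wrong} amount $y \neq x^{*}$ genuinely blocks the whole $loop$, not merely the single pivot. I would justify this exactly as in Section~\ref{subsec:advice}: each $loop$ transaction transfers the full capital, which is feasible only from the all-on-one-side configuration reached after the pivot when $y = x^{*}$; if $y < x^{*}$ the pivot itself is infeasible, and if $y > x^{*}$ the post-pivot channel is left unbalanced, so the very first full transfer of $loop$ violates a capital constraint. Everything else is the same pigeonhole-plus-arithmetic pattern as the maximization advice theorem, now counted against the number of rejected transactions.
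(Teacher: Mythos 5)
Your proof is correct and follows essentially the same route as the paper's: the same instance family $\mathcal{I}$ with capital $\langle 2^{f(n)};1\rangle$, the same structural fact that accepting $loop$ on $I_x$ requires shifting exactly $x$ during the prefix, and the same pigeonhole argument over the $2^{k} < 2^{f(n)}$ advice strings, yielding the ratio $\frac{n-f(n)-1}{f(n)} = \Omega\left(\frac{n}{f(n)}\right)$. Your phrasing of the pigeonhole (an unachievable target shift $x^{*}$, rather than two instances forced to share one strategy) and your explicit treatment of the degenerate case $OPT(I_{x^{*}})=0$ are minor refinements of the paper's argument, not a different approach.
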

\begin{proof}
    Let $\langle 2^{f(n)};1\rangle$ be the initial capital distribution.
    We define $$\mathcal{I} = \{I_x \mid x \in \{0,\dots,2^{f(n)}-1\}\}$$ where $$I_x = (l,r,1), (l,r,2),\dots(l,r,2^{f(n)-1}), (r,l,x+1), loop$$ for $f(n) \in o(n)$ and $loop$ being a sequence of transactions moving all the capital from left to right back and forth. An algorithm that does not accept $loop$ has a competitive ratio of at least $c \geq \frac{n-f(n)-1}{f(n)}$. Assume there is an algorithm reading $k < f(n)$ advice bits with a better competitive ratio. Since there are $2^{f(n)}$ different instances in $\mathcal{I}$ and fewer strategies are expressible by the advice, one strategy is used for two elements of the input set. In the previous section, we showed that transferring $i$ is the only strategy accepting $loop$ for an instance $I_i$. Then, for $I_j$ where $j \neq i$ the strategy that transfers $i$ with the transactions of the prefix cannot accept $loop$. As a result, the competitive ratio is at least $c \geq \frac{n-f(n)-1}{f(n)}$ which contradicts the assumption that there is an algorithm with a better competitive ratio reading $k < f(n)$ advice bits. \hfill \qed
\end{proof}

As stated in \cite{Borodin:1998:OCC:290169}, Yao's Principle \cite{yao1977probabilistic} bounds the expected competitive ratio for minimisation problems from below with
$$\bar{R} \geq max\left(min_i \frac{\E_{y(j)} [ALG_{i}(\sigma_j)]}{\E_{y(j)}[OPT(\sigma_j)]}, min_i \E_{y(j)}[\frac{ALG_i(\sigma_j)}{OPT(\sigma_j)}]\right)$$
where both arguments in the maximum function are proven lower bounds. The notation is the same as in section \ref{subsec:random-oblivious}.

\begin{theorem}\label{thm:minimizing-reject}
    Every randomised algorithm (minimising the number of rejected transactions) is $\Omega(\frac{n}{f(n)})$-competitive.
\end{theorem}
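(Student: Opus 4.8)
The plan is to mirror the proof of Theorem~\ref{thm:random-competitive}, replacing the maximisation form of Yao's principle with the minimisation form quoted above and reusing the hard instance family of Theorem~\ref{minRejectedAdvice}. Concretely, I would fix the initial distribution $\langle 2^{f(n)};1\rangle$, take $\mathcal{I}=\{I_x\}$ with $I_x = (l,r,1),\dots,(l,r,2^{f(n)-1}),(r,l,x+1),loop$ for $x\in\{0,\dots,2^{f(n)}-1\}$, and let $y(j)$ be the \emph{uniform} distribution over this family. Since $A_i(I_j)$ and $OPT(I_j)$ now count \emph{rejected} transactions, I would invoke the first argument of the minimisation bound, $\bar{R}\ge \min_i \frac{\E_{y(j)}[A_i(I_j)]}{\E_{y(j)}[OPT(I_j)]}$, rather than the per-instance ratio. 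This choice is important because $OPT(I_x)$ can be $0$ (for $x=2^{f(n)}-1$ the optimum rejects nothing), which would make the second argument $\E[A_i/OPT]$ ill-defined.

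The combinatorial core, which I expect to be the main obstacle to state cleanly, is the claim that any deterministic strategy can accept $loop$ for at most one instance. The point is that the prefix $(l,r,1),\dots,(l,r,2^{f(n)-1})$ is \emph{identical} across all $I_x$, so a deterministic algorithm makes the same accept/reject choices on it and therefore transfers a single fixed amount $s\in\{0,\dots,2^{f(n)}-1\}$ to the right before seeing $(r,l,x+1)$. As argued before Theorem~\ref{minRejectedAdvice}, $loop$ (which needs all $2^{f(n)}+1$ units consolidated on one side) is acceptable only when the state reaches $\langle 2^{f(n)}+1;0\rangle$, which forces $s=x$ exactly together with acceptance of $(r,l,x+1)$; since the left side never empties after the prefix, consolidation on the right is impossible. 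Hence the fixed $s$ matches $x$ for the single instance $I_s$ only, and for every other $x$ the algorithm must reject all $n-f(n)-1$ transactions of $loop$.

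With this established the computation is routine. Discarding the single matching instance (bounding $A_i(I_s)\ge 0$) and using $A_i(I_x)\ge n-f(n)-1$ for the remaining $2^{f(n)}-1$ instances, averaging gives $\E_{y(j)}[A_i(I_j)] \ge \frac{(2^{f(n)}-1)(n-f(n)-1)}{2^{f(n)}}$ for every $i$. On the other hand each $OPT(I_x)$ rejects only the unused prefix transactions, so $OPT(I_x)\le f(n)$ and thus $\E_{y(j)}[OPT(I_j)]\le f(n)$. Substituting into Yao's bound yields
$$\bar{R} \ge \frac{(2^{f(n)}-1)(n-f(n)-1)}{2^{f(n)}\,f(n)},$$
and since $\frac{2^{f(n)}-1}{2^{f(n)}}\ge \tfrac12$ and $n-f(n)-1=\Omega(n)$ when $f(n)=o(n)$, this is $\Omega(\frac{n}{f(n)})$, as required. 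The only remaining verification is that the optimal offline strategy for each $I_x$ is feasible under the capital constraints — transferring exactly $x$ via the binary expansion of $x$ keeps the left side non-negative and leaves the right side with precisely $x+1$ units, so that $(r,l,x+1)$ and then the full $loop$ execute — which I would dispatch as a short feasibility remark.
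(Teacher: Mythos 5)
Your proposal is correct and follows essentially the same route as the paper's proof: the same instance family $I_x = (l,r,1),\dots,(l,r,2^{f(n)-1}),(r,l,x+1),loop$ with initial distribution $\langle 2^{f(n)};1\rangle$, the same first argument of Yao's minimisation bound under the uniform distribution, and the same final bound $\frac{(2^{f(n)}-1)(n-f(n)-1)}{2^{f(n)}f(n)} \in \Omega(\frac{n}{f(n)})$. The details you make explicit --- avoiding the per-instance ratio form because $OPT(I_x)$ can be zero, the fixed-transfer argument showing a deterministic strategy accepts $loop$ for at most one instance, and the feasibility of $OPT$ via the binary expansion of $x$ --- are left implicit in the paper (which defers to ``previous proofs'') but are fully consistent with it.
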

\begin{proof}
    We construct the same set of instances as in the proof of Theorem \ref{minRejectedAdvice} with the same initial capital distribution. Accordingly, we define the set of strategies to be $\mathcal{A} = \{A_x \mid x \in \{0,\dots,2^{f(n)}-1\}\}$ where $A_x$ is the optimal strategy for $I_x$. As discussed in previous proofs, choosing a non-optimal strategy for an input results in the rejection of at least all of the $n-f(n)-1$ transactions in $loop$. We know that the optimal offline algorithm rejects at most $f(n)$ and at least no transactions.
    
    \begin{align*}
        c & \geq \min_i \frac{\E_{y(j)}[A_i(I_j)]}{\E_{y(j)}[OPT(I_j)]}&\mbox{Yao's Principle}&\\
        & \geq \min_i \frac{\E_{y(j)}[A_i(I_j)]}{f(n)}&\\
        & = \min_i \frac{\sum_{j=0,j \neq i}^{2^{f(n)}-1} ( A_i(I_j)) + OPT(I_j)}{2^{f(n)}f(n)} & \mbox{$y(j)$ uniform} &\\
        & \geq \frac{(2^{f(n)}-1)(n-f(n)-1)}{2^{f(n)}f(n)}&\\
        & = \frac{(1-\frac{1}{2^{f(n)}})(n-f(n)-1)}{f(n)}&\\
        & = (\frac{n}{f(n)}-1-\frac{1}{f(n)})(1-\frac{1}{2^{f(n)}}) \in \Omega(\frac{n}{f(n)})&\\
    \end{align*}\hfill \qed
\end{proof}

\begin{corollary}
    There is no competitive randomised algorithm (that minimises the number of rejected transactions) against oblivious adversaries.
\end{corollary}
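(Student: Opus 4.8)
The plan is to derive the corollary directly from Theorem \ref{thm:minimizing-reject} by exploiting the freedom in choosing the function $f(n)$. Recall that an algorithm is called \emph{competitive} precisely when its competitive ratio is bounded above by some constant $c$ independent of the input length $n$. Hence it suffices to exhibit a single admissible choice of $f(n) \in o(n)$ for which the lower bound $\Omega(n/f(n))$ guaranteed by the theorem diverges as $n \to \infty$.

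Concretely, I would fix a slowly growing function, for instance $f(n) = \lceil \log_2 n \rceil$, which clearly lies in $o(n)$ and therefore satisfies the hypothesis of Theorem \ref{thm:minimizing-reject}. Instantiating the theorem with this $f$ shows that every randomised algorithm (minimising the number of rejected transactions) is $\Omega(n/\log_2 n)$-competitive on the associated instance family $\mathcal{I}$. Since $n/\log_2 n \to \infty$, the expected competitive ratio cannot be bounded by any fixed constant $c$: for each candidate $c$ there is an $n$ large enough that the lower bound already exceeds $c$. This violates the definition of competitiveness, so no competitive randomised algorithm against an oblivious adversary can exist.

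The argument is essentially immediate once Theorem \ref{thm:minimizing-reject} is available, and the only point that requires care is the structure of the quantifiers. The theorem supplies, for \emph{each} admissible $f$, a hard instance family forcing a ratio of order $n/f(n)$, and the adversary is free to choose $f$ as a function of $n$ (equivalently, to scale the initial capital $2^{f(n)}$ appropriately). Thus there is no genuine obstacle in driving the ratio to infinity; I expect the entire difficulty to reduce to correctly stating the diverging limit and confirming that $f(n) \in o(n)$ indeed makes $n/f(n)$ unbounded.
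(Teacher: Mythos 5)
Your proposal is correct and follows essentially the same route as the paper: the paper's proof is a one-line instantiation of Theorem~\ref{thm:minimizing-reject} with $f(n)=1$, yielding an $\Omega(n)$ lower bound that rules out any constant competitive ratio. Your choice of $f(n)=\lceil \log_2 n\rceil$ gives the slightly weaker but still diverging bound $\Omega(n/\log_2 n)$, which works equally well; the only difference is that the constant choice $f(n)=1$ is simpler and gives the strongest bound the theorem can offer.
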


\begin{proof}
    Follows from Theorem \ref{thm:minimizing-reject} for $f(n)=1$. \hfill \qed
\end{proof}

\section{Related Work}

This work builds on the definitions and results of Avarikioti et al.~\cite{avarikioti2018algorithmic}, where a framework to approach the design of payment channel networks from an algorithmic perspective was originally introduced. For a single channel, they showed that maximising the profit with given capital assignments is NP-hard and presented a fully polynomial time approximation scheme in the offline setting, i.e. when the sequence of future transactions is known upfront. Moreover, they studied the online case, where no prior information is known about the future transactions. In particular, they showed that there is no competitive (deterministic) online algorithm and presented an $O(\log(C))$-competitive algorithm that constructs a payment hub that accepts all transactions.
In this paper, we extend their work for the online setting and consider randomized algorithms, algorithms with advice and resource augmentation algorithms. 

Additionally, the design of payment networks with fees from the viewpoint of a payment service provider who wants to maximise the profit is discussed in \cite{avarikioti2018payment}. In the contrary to this work where we assume constant fee for every transaction, in \cite{avarikioti2018payment}, each channel requires a different fee, much like the tolls on a road network. Despite the different assumptions, both works share the same objective, to maximize the profit for the network operator (and designer).

Payment channels were originally introduced by Spilman \cite{spilman2013channels} as a solution for the limited transaction throughput of Bitcoin \cite{Nakamoto_bitcoin:a}. Spilman channels allowed two parties to transact off-chain as long as the direction of the capital movement was only in one direction. Later, bidirectional channels were introduced simultaneously by Poon et al.~\cite{poon2015lightning} and by Decker and Wattenhofer \cite{decker2015duplex}. Many recent constructions of payment or state channels have been proposed addressing different aspects and needs (e.g. state channels handle smart contracts) of various cryptocurrencies \cite{miller2017sprites,decker2018eltoo,coleman2018counterfactual,avarikioti2019brick}.
Although these works propose different constructions for payment (or state) channels, they all result in a decentralized payment network and thus require complex routing algorithms and high capital availability on behalf of the users of the network. Therefore, the results of this work apply to all these proposed payment channel solutions.

Multiple works exist that focus on the routing problem of payment networks.
Prihodko et al.~propose Flare \cite{prihodko2016flare}, a proposal for path discovery by gathering information about the Lightning network topology. However, Flare raised privacy concerns which were later addressed by SilentWhispers \cite{moreno2017silentwhispers} and SpeedyMurmurs \cite{roos2017settling}. In contrast with these works, we assume a central authority, a payment service provider, that designs the network, and thus has complete knowledge on the network structure and capital capacity of each channel. Our objective is to design the optimal network structure to maximize the profit for the service provider.

On a different direction, Dziembowski et al.~proposed Perun \cite{Dziembowski2017PERUNVP}, a virtual channel hub that allows the users connected to the hub to directly interact off-chain via virtual channels establish through the virtual hub. 
In the same line of work, Heilman et al.~presented Tumblebit \cite{heilman2017tumblebit}, a payment channel hub compatible with Bitcoin that guarantees anonymity and security even though the hub is an untrusted intermediary. 
Similarly, Green and Miers presented Bolt \cite{green2017bolt}, another channel construction that requires smart contracts but offers stronger privacy and security guarantees.
Moreover, Khalil et al.~introduced Nocust~\cite{Khalil2018nocust} whereas Poon and Buterin introduced Plasma~\cite{poon2017plasma}, which are layer-2 commit-chains, i.e., off-chain payment hubs.
Despite the fact that all these work also assume a central coordinator that enables the off-chain payments through a centralized network, they mainly focus on the construction of the payment hub.
In contrast, we focus on the algorithmic perspective of the problem, and address more primitive questions: how can the central coordinator profit the most, 
and which transactions should he facilitate through the network given a capital constraint.


\section{Conclusions}
We studied the online problem of a single channel, where a Payment Service Provider (PSP) is called to decide which transactions to accept or reject. The objective is to maximize the number or accepted transactions and thus maximize the PSP's profit. 

We showed there is no randomized online algorithm against any adaptive online adversary. Furthermore, we considered deterministic algorithms with advice and proved that there is no competitive randomized algorithm against an oblivious adversary. In addition, we examined resource augmentation, and showed that even with twice as much capital there is no competitive deterministic algorithm against an oblivious adversary. Finally, we considered the complementary minimization problem - minimizing the number of rejected transactions - and similarly proved there is no competitive randomized algorithm against oblivious adversaries.

Given that the single channel case is merely a simple sub-case of the general problem and the flurry of negative results we presented in this work, we conclude that the online channel design is a demanding problem with interesting future work.



%
%
%
\bibliographystyle{splncs04}
\bibliography{ref}

\end{document}